\newtheorem{definition}{Definition}
\newtheorem{example}{Example}
\newtheorem{lemma}{Lemma}
\newtheorem{theorem}{Theorem}
\newtheorem{problem}{Problem}
\DeclareMathOperator{\rank}{rank}
\DeclareMathOperator{\rk}{rk}
\DeclareMathOperator{\defi}{def}
\DeclareMathOperator{\im}{im}
\newcommand{\defeq}{\overset{\defi}{=}}
\newcommand{\Fqm}{\mathbb F_{q^m}}
\newcommand{\Fqn}{\mathbb F_{q^n}}
\newcommand{\Fq}{\mathbb F_{q}}
\newcommand{\dhalffrac}{\left\lfloor \tfrac{d-1}{2}\right\rfloor}
\newcommand{\dhalf}{\left\lfloor (d-1)/2\right\rfloor}
\newcommand{\Gab}[2]{\mathcal{G}(#1,#2)}
\renewcommand{\a}{\mathbf a}
\renewcommand{\c}{\mathbf c}
\renewcommand{\r}{\mathbf r}
\newcommand{\x}{\mathbf x}
\newcommand{\X}{\mathbf X}
\newcommand{\quadbinom}[2]{\ensuremath{
\left[
\begin{matrix}
#1\\
#2
\end{matrix}
\right]
}}
\newcommand{\quadbinomsmall}[2]{\ensuremath{
\left[
\begin{smallmatrix}
#1\\
#2
\end{smallmatrix}
\right]
}}
\begin{document}

\vspace*{5mm}

\noindent
\textbf{\LARGE Bounds on List Decoding Gabidulin Codes
}
\thispagestyle{fancyplain} \setlength\partopsep {0pt} \flushbottom
\date{}

\vspace*{5mm}
\noindent
\textsc{Antonia Wachter-Zeh}\footnote{This work was supported by the German Research Council "Deutsche Forschungsgemeinschaft" (DFG) under Grant No. Bo 867/21-2.} \hfill \texttt{antonia.wachter@uni-ulm.de} \\
{\small Institute of Communications Engineering, Ulm University, Ulm, Germany and Institut de Recherche Math\'ematique de Rennes, Universit\'e de Rennes 1, Rennes, France} \\

\medskip

\begin{center}
\parbox{11,8cm}{\footnotesize
\textbf{Abstract.}
An open question about Gabidulin codes is whether polynomial-time list decoding beyond half the minimum distance is possible or not. 
In this contribution, we give a lower and an upper bound on the list size, i.e., the number of codewords in a ball around the received word. 
The lower bound shows that if the radius of this ball is greater than the Johnson radius, this list size can be exponential and 
hence, no polynomial-time list decoding is possible. The upper bound on the list size uses subspace properties.
}
\end{center}

\baselineskip=0.9\normalbaselineskip

\section{Introduction}
Gabidulin codes \cite{Gabidulin_TheoryOfCodes_1985} can be seen as the analogs of Reed--Solomon (RS) codes in rank metric.
There are several efficient decoding algorithms up to half the minimum rank distance.
However, in contrast to RS codes, there is no polynomial-time decoding algorithm beyond half the minimum distance.
For RS codes, it can be shown that the number of codewords in a ball around \emph{any} received word is always polynomial in the length when the radius of the ball is at most the Johnson radius. 
The Guruswami--Sudan algorithm provides an efficient polynomial-time list decoding algorithm of RS codes up to the Johnson radius.

For Gabidulin codes, there is no polynomial-time list decoding algorithm; it is not even known, whether such an algorithm can exist or not.
An exponential lower bound on the number of codewords in a ball of radius $\tau$ around the received word would prohibit polynomial-time list decoding since the 
list size can be exponential, whereas a polynomial upper bound would show that it might be possible. 
Faure \cite{Faure2006Average} and Augot and Loidreau \cite{AugotLoidreau-JohnsonRankMetric} made first investigations of this problem.

In this paper, we provide a lower and an upper bound on the list size. The lower bound shows that the list size can be exponential in the length when the radius is at least the Johnson radius and therefore in this region, no polynomial-time list decoding is possible. 
The upper bound uses the properties of subspaces and gives a good estimate of the number of codewords in such a ball, but is exponential in the length and therefore does not provide an answer to polynomial-time list decodability in the region up to the Johnson bound.

\section{Preliminaries}
\subsection{Definitions and Notations}
Let $q$ be a power of a prime, let $\Fq$ denote the finite field of order $q$ and let $\Fqm$ be the extension field of degree $m$ over $\Fq$.
We denote $x^{[i]}= x^{q^i}$ for any integer $i$, then a \emph{linearized polynomial}, 
introduced by Ore \cite{Ore_OnASpecialClassOfPolynomials_1933}, 
over $\Fqm$ has the form
$f(x) = \sum_{i=0}^{d_f} f_i x^{[i]}$, 
with $f_i \in \mathbb{F}_{q^m}$. 
If the coefficient $f_{d_f}\neq 0$, we call $d_f \defeq \deg_q f(x)$ the \textit{q-degree} of $f(x)$. 
For all $\alpha_1,\alpha_2 \in \mathbb{F}_{q}$ and $\forall \ a,b \in \mathbb{F}_{q^m}$, the following holds: 
$f(\alpha_1 a+\alpha_2 b) = \alpha_1 f(a)+\alpha_2 f(b)$. 
The (usual) addition and the non-commutative composition $f(g(x))$ (also called \emph{symbolic product}) convert the set of linearized polynomials into a non-commutative ring with the identity element $x^{[0]}=x$. In the following, all polynomials are linearized polynomials. 

Given a basis of $\Fqm$ over $\Fq$, there exists a one-to-one mapping for each vector $\mathbf x \in \mathbb{F}_{q^m}^n$ on a matrix $\mathbf X \in \Fq^{m \times n}$. 
Let $\rank(\mathbf x)$ denote the (usual) rank of $\mathbf X$ over $\mathbb{F}_{q}$ and let 
$\mathcal R (\x) = \mathcal R(\X)$ denote the row space of $\X$ over $\Fq$. 
The kernel of a matrix is denoted by $\ker(\x) = \ker(\X)$ and the image by $\im(\x) = \im(\X)$. 
For an $m \times n$ matrix, if $\dim \ker(\x) = t$, then $\dim \im(\x)=\rank(\x) = n-t$.
Throughout this paper, we use the notation as vector or matrix equivalently, whatever is more convenient.
The \textit{minimum rank distance} $d$ of a code $\mathcal C$ is defined by 
\begin{equation*}
d = \min \lbrace \rank(\mathbf c) \; | \; \mathbf c \in \mathcal C, \mathbf c \neq \mathbf 0 \rbrace. 
\end{equation*}
A \emph{Gabidulin code} can be defined by the evaluation of degree-restricted linearized polynomials as follows.
\begin{definition}[Gabidulin Code \cite{Gabidulin_TheoryOfCodes_1985}]
A linear $\Gab{n}{k}$ Gabidulin code of length $n$ and dimension $k$ over $\mathbb{F}_{q^m}$ for $n \leq m$ is the set of all codewords, which are the evaluation of a $q$-degree restricted linearized polynomial $f(x)$:
\begin{equation*}
\Gab{n}{k} \defeq \lbrace \c  = (f(\alpha_0), f(\alpha_1),\dots, f(\alpha_{n-1}) \big| \deg_q f(x) < k)\rbrace,
\end{equation*}
where the fixed elements $\alpha_0, \dots, \alpha_{n-1} \in \Fqm$ are linearly independent over $\Fq$. 
\end{definition}
Gabidulin codes are \textit{Maximum Rank Distance} (MRD) codes, i.e., they fulfill the rank metric Singleton bound with equality and $d=n-k+1$ \cite{Gabidulin_TheoryOfCodes_1985}. 

The number of $s$-dimensional subspaces of an $n$-dimensional vector space over $\Fq$
is the Gaussian binomial, calculated by
\begin{equation*}
\quadbinom{n}{s} \defeq \prod\limits_{i=0}^{s-1} \frac{q^n-q^i}{q^s-q^i},
\end{equation*}
with the upper and lower bounds 
$q^{s(n-s)}\leq \quadbinomsmall{n}{s} \leq 4 q^{s(n-s)}$.

Moreover, $\mathcal B_{\tau}(\a)$ denotes a ball of radius $\tau$ in rank metric around a word $\a\in \Fqm^n$. The volume of $\mathcal B_{\tau}(\a)$ is independent of its center and is simply the number of $m \times n$ matrices of rank less than or equal to $\tau$.

\subsection{Problem Statement}

\begin{problem}[Maximum List Size]
Let the Gabidulin code $\Gab{n}{k}$ over $\Fqm$ with $n \leq m$ and $d=n-k+1$ be given. Let $\tau < d$. 
Find a lower and upper bound on the maximum number of codewords $\ell$ in the ball of 
rank radius $\tau$ around $\r= (r_0 \ r_1 \ \dots \ r_{n-1}) \in \Fqm^n$. Hence, find a bound on
\begin{equation*}
\ell \defeq \max_{\r \in \Fqm^n}\left(\left|\mathcal B_{\tau}(\r) \cap \mathcal G\right|\right).
\end{equation*}
\end{problem}
For an upper bound, we have to show that the bound holds for \emph{any} received word $\r$, 
whereas for a lower bound it is sufficient to show that there exists one $\r$ for which 
this bound on the list size is valid.

Let $\mathcal L = \{ \c_1,\c_2,\dots,\c_{\ell} \}$ with $|\mathcal L|=\ell$ denote the list of all codewords in the ball 
of radius $\tau$ around $\r$, i.e., $\c_i \in \Gab{n}{k}$ and $\rank(\r-\c_i) \leq \tau$, for $i=1,\dots,\ell$.

\section{A Lower Bound on the List Size}
Faure showed with a probabilistic approach in \cite{Faure2006Average} that the maximum list size of Gabidulin codes with $n=m$
is exponential in $n$ for $\tau \geq n- \sqrt{n(n-d)}$. Our bound slightly improves this value and uses a different proving strategy, 
based on evaluating linearized polynomials. This approach is inspired by Justesen and Hoholdt's \cite{Justesen2001Bounds} and Ben-Sasson, Kopparty, and Radhakrishna's \cite{BenSasson2010Subspace} approaches for bounding the list size of RS codes. 
\begin{theorem}[Lower Bound on the List Size]
Let the Gabidulin code $\Gab{n}{k}$ over $\Fqm$ with $n \leq m$ and $d=n-k+1$ be given. Let $\tau < d$. 
Then, there exists a word $\r \in \Fqm^n$ such that
\begin{equation}\label{eq:listsize_polys}
 \ell \geq\left|\mathcal B_{\tau}(\r) \cap \mathcal G\right|\geq \frac{\quadbinomsmall{n}{n-\tau}}{(q^m)^{n-\tau-k}} 
= q^m q^{\tau(m+n) -\tau^2-md},
 \end{equation}
 and for the special case of $n=m$: 
$\ell \geq q^n q^{2n\tau - \tau^2 - nd}$.
\end{theorem}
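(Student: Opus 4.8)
The plan is to exhibit a single received word $\mathbf r$ that realizes the bound, built from subspace polynomials. Fix the $n$-dimensional $\Fq$-subspace $V=\langle\alpha_0,\dots,\alpha_{n-1}\rangle$ of $\Fqm$ spanned by the code locators. The starting observation is that if $\mathbf c\leftrightarrow f$ with $\deg_q f<k$ and $\mathbf r\leftrightarrow g$ for linearized polynomials $f,g$, then $\mathbf r-\mathbf c\leftrightarrow g-f$, and since $g-f$ acts $\Fq$-linearly on $\Fqm$ one has $\rank(\mathbf r-\mathbf c)=\dim_{\Fq}(g-f)(V)=n-\dim_{\Fq}\big(V\cap\ker(g-f)\big)$, where $\ker(g-f)$ denotes the $\Fq$-space of roots of $g-f$ in $\Fqm$. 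Hence $\mathbf c\in\mathcal B_\tau(\mathbf r)$ is equivalent to $g-f$ vanishing on some $(n-\tau)$-dimensional subspace of $V$. This reduces the counting problem to producing many codewords whose difference with $g$ has a large root space inside $V$.

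First I would attach to every $(n-\tau)$-dimensional subspace $U\subseteq V$ its monic subspace polynomial $M_U$, the unique monic linearized polynomial of $q$-degree $n-\tau$ whose root space is exactly $U$, and write $M_U(x)=x^{[n-\tau]}-p_U(x)$ with $\deg_q p_U<n-\tau$. Since the roots of $M_U$ recover $U$, the assignment $U\mapsto p_U$ is injective, so we obtain $\quadbinomsmall{n}{n-\tau}$ distinct polynomials $p_U$. The difficulty is that $p_U$ may have $q$-degree up to $n-\tau-1$, which exceeds the codeword bound $k$ (note $\tau<d=n-k+1$ gives $n-\tau\ge k$), so the $p_U$ are not codewords; overcoming this degree mismatch is the crux of the argument.

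To bridge the gap I would split each $p_U=p_U^{\mathrm{low}}+p_U^{\mathrm{high}}$ into its part of $q$-degree $<k$ and its part supported on degrees $k,\dots,n-\tau-1$. The high part lives in an $\Fqm$-space of dimension $n-\tau-k$, so there are at most $(q^m)^{n-\tau-k}$ values for it; by pigeonhole some value $v$ is the common high part of a family $\mathcal U$ of subspaces with $|\mathcal U|\ge\quadbinomsmall{n}{n-\tau}/(q^m)^{n-\tau-k}$. Now I would take $\mathbf r$ to be the evaluation of $g(x):=x^{[n-\tau]}-v(x)$. For each $U\in\mathcal U$ one checks $g-p_U^{\mathrm{low}}=x^{[n-\tau]}-v-p_U^{\mathrm{low}}=x^{[n-\tau]}-p_U=M_U$, which vanishes on $U$; since $p_U^{\mathrm{low}}$ has $q$-degree $<k$ it is a genuine codeword $\mathbf c_U$, and the rank identity yields $\rank(\mathbf r-\mathbf c_U)=n-\dim_{\Fq}(V\cap\ker M_U)=\tau$. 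As $U\mapsto p_U$ is injective and $v$ is fixed on $\mathcal U$, the $\mathbf c_U=p_U^{\mathrm{low}}$ are pairwise distinct, so $|\mathcal B_\tau(\mathbf r)\cap\mathcal G|\ge|\mathcal U|\ge\quadbinomsmall{n}{n-\tau}/(q^m)^{n-\tau-k}$.

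Finally I would pass to the closed form by invoking the Gaussian-binomial lower bound $\quadbinomsmall{n}{n-\tau}\ge q^{\tau(n-\tau)}$ and substituting $k=n-d+1$; collecting exponents gives $q^{m+\tau(m+n)-\tau^2-md}=q^m q^{\tau(m+n)-\tau^2-md}$, and setting $n=m$ produces the stated special case $q^n q^{2n\tau-\tau^2-nd}$. The main obstacle, as noted, is the degree mismatch between the subspace polynomials and the codewords; the pigeonhole step that absorbs a common high part into the choice of $\mathbf r$ is precisely what resolves it, while the injectivity of $U\mapsto p_U$ ensures that the resulting codewords are genuinely distinct.
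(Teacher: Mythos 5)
Your proposal is correct and takes essentially the same route as the paper: subspace polynomials of $q$-degree $n-\tau$, a pigeonhole argument on the $n-\tau-k$ coefficients of $q$-degree at least $k$, and a received word whose high coefficients equal the common value, so that subtracting a low-degree codeword recovers a full subspace polynomial of rank defect $n-\tau$. The only immaterial differences are that you normalize the center's low-degree part to zero (taking $\r$ as the evaluation of $x^{[n-\tau]}-v$) where the paper takes $\r$ as the evaluation of one member $f$ of the pigeonholed family --- the two centers differ by a codeword, so the counts agree --- and that you work with subspaces of the locator span $V$ rather than assuming the locators form a basis of $\Fqn$, which is in fact a slightly more careful treatment of arbitrary evaluation points.
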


 \begin{proof}
 Since $\tau < d = n-k+1$, also $ k-1 < n-\tau $ holds. 
 Consider all monic linearized polynomials of $q$-degree exactly $n-\tau$ with a root space of dimension 
 $n-\tau$, where all roots are in $\Fqn$. There are exactly (see e.g. \cite[Theorem 11.52]{Berlekamp1984Algebraic})
 $\quadbinomsmall{n}{n-\tau}$
such polynomials. Now, let us consider a subset of these polynomials, denoted by $\mathcal P$: all polynomials where
the $q$-monomials of
$q$-degree greater than or equal to $k$ have the same coefficients. 
Due to Dirichlet's principle there exist coefficients such that the number of 
 such polynomials is 
 \begin{equation*}
|\mathcal P| \geq\frac{\quadbinomsmall{n}{n-\tau}}{(q^m)^{n-\tau-k}},
 \end{equation*}
since there are $(q^m)^{n-\tau-k}$ possibilities to choose the 
  highest $n-\tau -(k-1)$ coefficients of a \emph{monic} linearized polynomial over $\Fqm$. 
 
Note that 
the difference between any two polynomials in $\mathcal P$ is a linearized polynomial of $q$-degree
strictly less than $k$ and therefore the evaluation polynomial of a codeword of $\Gab{n}{k}$. 
Let $\r$ be the evaluation of $f(x) \in \mathcal P$ at a basis $\mathcal A = \{\alpha_0,\alpha_1,\dots,\alpha_{n-1}\}$ of $\Fqn$ over $\Fq$:
 \begin{equation*}
 \r = (r_0 \ r_1 \ \dots \ r_{n-1}) = (f(\alpha_0) \ f(\alpha_1) \ \dots \ f(\alpha_{n-1})).
 \end{equation*}
Further, let also $g(x) \in \mathcal P$, then 
 $f(x)-g(x)$ has $q$-degree less than $k$. Let
 $\c$ denote the evaluation of $f(x)-g(x)$ at $\mathcal A$.
Then, $\r-\c$ is the evaluation of $f(x)-f(x)+g(x) = g(x) \in \mathcal P$, whose 
root space has dimension $n-\tau$ and all roots are in $\Fqn$.  
 Thus, $\dim \ker(\r-\c) = n-\tau$ and $\dim \im (\r-\c) = \rk(\r-\c) = \tau$. 
 Therefore, for \emph{any} $g(x) \in \mathcal P$, the evaluation of $f(x)-g(x)$
 is a codeword from $\Gab{n}{k}$ and has rank distance $\tau$ from $\r$. 
 This provides the following lower bound on the maximum list size:
 \begin{equation*}
 \ell \geq \frac{\quadbinomsmall{n}{n-\tau}}{(q^m)^{n-\tau-k}} 
 \geq \frac{q^{(n-\tau)\tau}}{(q^m)^{n-\tau-k}} 
= q^m q^{\tau(m+n) -\tau^2-md},
 \end{equation*}
 and for $n=m$ the special case follows.
 \end{proof}
 This lower bound is valid for any $\tau < d$, but we want to know, which is the smallest value for $\tau$ such that this expression is \emph{exponential} in $n$.
 For $n=m$, we can rewrite \eqref{eq:listsize_polys} by
 \begin{equation*}
 \ell \geq q^{n(1-\epsilon)} \cdot q^{2n\tau - \tau^2 - nd+n\epsilon},
 \end{equation*}
 where the first part is exponential in $n$ for any $0 \leq \epsilon < 1$. 
 The second exponent is positive for 
  \begin{equation}\label{eq:tau_lessjohnson}
 \tau \geq n- \sqrt{n(n-d+\epsilon)} \defeq \tau_{LB}.
  \end{equation}
Therefore, our lower bound \eqref{eq:listsize_polys} 
shows that the maximum list size is exponential in $n$ for $\tau \geq \tau_{LB}$. 
For $\epsilon = 0$, the value $\tau_{LB}$ gives exactly the Johnson radius 
for Hamming metric. 

This reveals a difference between the known limits to list decoding of Gabidulin and RS codes. 
For RS codes, polynomial-time list decoding up to the Johnson radius is guaranteed 
by the Guruswami--Sudan algorithm. However, it is not proven that the Johnson radius is tight for RS codes, i.e., it is not known if the list size is polynomial between the Johnson radius and the known exponential lower bounds (see e.g. \cite{Justesen2001Bounds,BenSasson2010Subspace}).
For Gabidulin codes, we have shown that the maximum list size is exponential for $\tau \geq\tau_{LB}$, which is asymptotically equal to the 
Hamming metric Johnson radius. 

\begin{example}
For the Gabidulin code $\mathcal G (n=12,k=6)$ with $d=7$, the Bounded Minimum Distance decoding radius is $\tau_{BMD} = \dhalf= 3$, 
the lower bound by Faure (equivalent to the Hamming metric Johnson radius) is $\tau_J = \lceil 4.2 \rceil = 5$ and 
\eqref{eq:tau_lessjohnson} with $\epsilon = 0.9$ gives $\tau_{LB} = \lceil 3.58 \rceil  = 4$. This means for this 
code of rate $k/n=1/2$, no polynomial time list-decoding beyond $\tau_{BMD}$ is possible.
\end{example}

\section{An Upper Bound on the List Size}

  The following lemma shows that the row spaces of $\r-\c_i$ and $\r-\c_j$, $\c_i,\c_j \in \mathcal L$, $i \neq j$, have no $(2\tau-d+1)$-dimensional subspace in common.
  \begin{lemma}\label{lem:subspace2taud1}
  Let $\tau < d$ and let $\r \in \Fqm^n$. Let $\c_i$, for $i = 1,\dots,\ell$, be codewords of 
  the Gabidulin code $\Gab{n}{k}$ with minimum rank distance $d$ and let $\rk(\r-\c_i) \leq \tau$ hold for all $i=1,\dots,\ell$. Let $\rk(\r-\c_i) =t_i\leq \tau$ and $\rk(\r-\c_j) =t_j\leq \tau$, $i \neq j$.
   Then, the row spaces of $(\r-\c_i)$ and $(\r-\c_j)$ have no subspace of dimension at least $t_i +t_j -d+1$ in common, for $\dhalf < t_i,t_j \leq \tau$.
  \end{lemma}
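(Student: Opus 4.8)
The plan is to reduce the statement about common subspaces to a bound on the dimension of the intersection of the two row spaces, and to obtain that bound from the minimum distance of the code. First I would introduce the error vectors $\e_i \defeq \r - \c_i$ and $\e_j \defeq \r - \c_j$, so that $\rk(\e_i) = t_i$ and $\rk(\e_j) = t_j$ are exactly the dimensions of the row spaces $\mathcal R(\e_i)$ and $\mathcal R(\e_j)$ as subspaces of $\Fq^n$. The key algebraic identity is that the difference of the two codewords is expressible through the errors alone,
\begin{equation*}
\c_i - \c_j = (\r - \c_j) - (\r - \c_i) = \e_j - \e_i.
\end{equation*}

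Next I would bound $\rk(\c_i - \c_j)$ from two sides. Since $i \neq j$ the codewords are distinct, so $\c_i - \c_j$ is a nonzero codeword of $\Gab{n}{k}$ and hence $\rk(\c_i - \c_j) \geq d$. For the opposite direction I pass to the matrix representatives over $\Fq$: each row of the matrix of $\e_j - \e_i$ is the difference of a row of $\e_j$ and a row of $\e_i$, which gives the inclusion $\mathcal R(\e_j - \e_i) \subseteq \mathcal R(\e_i) + \mathcal R(\e_j)$ and therefore
\begin{equation*}
\rk(\c_i - \c_j) = \dim \mathcal R(\e_j - \e_i) \leq \dim\bigl(\mathcal R(\e_i) + \mathcal R(\e_j)\bigr).
\end{equation*}

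I would then invoke the dimension formula for the sum of two subspaces, $\dim\bigl(\mathcal R(\e_i) + \mathcal R(\e_j)\bigr) = t_i + t_j - \dim\bigl(\mathcal R(\e_i) \cap \mathcal R(\e_j)\bigr)$, and combine it with $d \leq \rk(\c_i - \c_j)$ to conclude $\dim\bigl(\mathcal R(\e_i) \cap \mathcal R(\e_j)\bigr) \leq t_i + t_j - d$. Any subspace common to both row spaces is contained in their intersection, so its dimension is at most $t_i + t_j - d$; equivalently, no common subspace of dimension at least $t_i + t_j - d + 1$ can exist, which is the assertion. The hypothesis $\dhalf < t_i, t_j$ is precisely what makes this non-vacuous: by integrality it forces $t_i + t_j \geq d$, so the excluded dimension $t_i + t_j - d + 1$ is at least $1$.

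I expect no genuine difficulty here, since the argument is a short inclusion–exclusion count. The only point that needs care is the bookkeeping between the $\Fqm$-vector picture and the $\Fq$-matrix picture, so that the row-space inclusion $\mathcal R(\e_j - \e_i) \subseteq \mathcal R(\e_i) + \mathcal R(\e_j)$ and the identification $\rk(\cdot) = \dim \mathcal R(\cdot)$ are applied to the correct objects; everything else is the standard dimension formula together with the MRD property $d = n-k+1$.
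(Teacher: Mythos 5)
Your proof is correct and follows essentially the same route as the paper: both arguments bound $\rk(\c_i-\c_j)$ above by $\dim\bigl(\mathcal R(\r-\c_i)+\mathcal R(\r-\c_j)\bigr) \leq t_i+t_j-\dim\bigl(\mathcal R(\r-\c_i)\cap\mathcal R(\r-\c_j)\bigr)$ (the paper writes this sum as the row space of the stacked matrix) and play it against the minimum distance $\rk(\c_i-\c_j)\geq d$. The only cosmetic difference is that the paper argues by contradiction while you derive the intersection bound directly, and your closing remark on why $\dhalf < t_i,t_j$ makes the excluded dimension at least $1$ is a small bonus not spelled out in the paper.
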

  \begin{proof}

  Assume, there exist $(\r-\c_i)$ and $(\r-\c_j)$, with $\rank(\r-\c_i)= t_i$, $\rank(\r-\c_j)=t_j$, $i \neq j$, such that their row spaces contain the same 
  subspace of dimension at least $(t_i+t_j -d+1)$. Then, 
   \begin{align*} 
   \dim(\mathcal R(\c_i-\c_j)) &=
   \dim(\mathcal R(\r-\c_i-\r+\c_j)) 
 \leq   \dim\left(\mathcal R
   \left(
   \begin{matrix}
   \r-\c_i\\
   \r-\c_j
   \end{matrix}\right)\right)\\
   & \leq t_i+t_j -(t_i+t_j-d+1) = d-1,
  \end{align*}
  which is a contradiction to $\rk(\c_i-\c_j) \geq d$.
  \end{proof}
  This means in particular, if $\rk(\r-\c_i) = \rk(\r-\c_j) = t \leq \tau$, they have no subspace of dimension at least
  $2t-d+1$ in common. 
 Based on this lemma, we obtain the following upper bound on the maximum list size.
 
 \begin{theorem}[Upper Bound on the List Size]
 Let the Gabidulin code $\Gab{n}{k}$ over $\Fqm$ with $n \leq m$ and $d=n-k+1$ be given. Let $\tau < d$. 
 Then, for any word $\r \in \Fqm^n$ and hence, for the maximum list size, the following holds
 \begin{align}
  \ell &=\max_{\r \in \Fqm^n}\left(\left|\mathcal B_{\tau}(\r) \cap \mathcal G\right|\right)
  \leq \sum\limits_{t=\dhalf+1}^{\tau} \frac{\quadbinomsmall{n}{2 t + 1-d}}{\quadbinomsmall{t}{2 t + 1-d}}\label{eq:listsize_polys_upper}\\
  &\leq  4 \sum\limits_{t=\dhalf+1}^{\tau} q^{(2t-d+1)(n-t)}
  \leq 4\left(\tau-\dhalffrac\right)\cdot q^{(2\tau-d+1)(n-\dhalf-1)}.\label{eq:listsize_polys_upper2}
  \end{align}
  \end{theorem}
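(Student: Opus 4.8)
The plan is to \emph{stratify} the list by the rank distance $t=\rk(\r-\c_i)$ and bound each stratum separately via a double-counting argument on subspaces, using Lemma~\ref{lem:subspace2taud1}. First I would observe that there is at most one codeword within rank distance $\dhalf$ of $\r$: if $\rk(\r-\c_i)\le\dhalf$ and $\rk(\r-\c_j)\le\dhalf$ for $i\neq j$, then $\rk(\c_i-\c_j)\le\rk(\r-\c_i)+\rk(\r-\c_j)\le 2\dhalf\le d-1$, contradicting the minimum distance $d$. This unique close codeword contributes at most $1$ to $\ell$ and is dominated by the sum below, so it suffices to bound, for each $t$ with $\dhalf<t\le\tau$, the number of codewords at rank distance exactly $t$.

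Fix such a $t$ and set $s=2t+1-d$; note $1\le s\le t$ for $\dhalf<t\le\tau<d$, so the Gaussian binomials that follow are well defined. Let $\c_{i_1},\dots,\c_{i_N}$ be the codewords with $\rk(\r-\c_{i_u})=t$, and associate to each the $t$-dimensional row space $\mathcal R(\r-\c_{i_u})\subseteq\Fq^n$. By the special case of Lemma~\ref{lem:subspace2taud1} recorded after its proof, any two of these $t$-spaces have no common subspace of dimension $s$. Counting $s$-dimensional subspaces of $\Fq^n$ then gives the bound: each of the $N$ row spaces contains exactly $\quadbinomsmall{t}{s}$ subspaces of dimension $s$, these $N$ families are pairwise disjoint by the lemma, and together they lie inside the $\quadbinomsmall{n}{s}$ subspaces of dimension $s$ in $\Fq^n$. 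Hence $N\cdot\quadbinomsmall{t}{s}\le\quadbinomsmall{n}{s}$, i.e.\ $N\le\quadbinomsmall{n}{s}\big/\quadbinomsmall{t}{s}$, and summing over $t$ yields \eqref{eq:listsize_polys_upper}.

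For \eqref{eq:listsize_polys_upper2} I would substitute the Gaussian binomial estimates $q^{s(N-s)}\le\quadbinomsmall{N}{s}\le 4q^{s(N-s)}$: the numerator is at most $4q^{s(n-s)}$ and the denominator at least $q^{s(t-s)}$, so each summand is at most $4q^{s(n-t)}=4q^{(2t-d+1)(n-t)}$, the exponents $n-s$ and $t-s$ differing by exactly $n-t$. Finally, since $2t-d+1\le 2\tau-d+1$ and $n-t\le n-\dhalf-1$ on the range $\dhalf<t\le\tau$, and both factors are positive there, every term is bounded by $4q^{(2\tau-d+1)(n-\dhalf-1)}$; multiplying by the number $\tau-\dhalffrac$ of terms gives the last inequality.

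I expect the crux to be the double-counting step, and specifically the need for Lemma~\ref{lem:subspace2taud1} in its full strength: it is not enough that distinct codewords have distinct row spaces, one needs the stronger fact that their families of $s$-dimensional subspaces are \emph{disjoint}, which is precisely what the lemma provides through the bound $t_i+t_j-d+1$. The remaining care is bookkeeping: checking $1\le s\le t$ so the quantities are meaningful, confirming positivity of both exponent factors before maximizing them separately, and handling the single close codeword, none of which should pose real difficulty.
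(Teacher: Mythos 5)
Your proof is correct and takes essentially the same route as the paper's: stratify the list by the rank $t$ of $\r-\c_i$, use the special case of Lemma~\ref{lem:subspace2taud1} to conclude that the families of $(2t-d+1)$-dimensional subspaces of the row spaces are pairwise disjoint, double-count against the $\quadbinomsmall{n}{2t-d+1}$ such subspaces of $\Fq^n$, and finish with the Gaussian-binomial estimates. Your explicit treatment of the at most one codeword within distance $\dhalf$ (and the checks $1\le 2t+1-d\le t$) is a small refinement the paper leaves implicit, but the argument is the same.
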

 \begin{proof}
 We consider all words in $\Fqm^n$ with $n\leq m$, therefore these words can be seen as matrices in an $n$-dimensional space.
 For any $t$, where $\dhalf \leq t \leq d$, 
there are $\quadbinomsmall{n}{2 t-d+1}$ subspaces of dimension 
 $2 t-d+1$. 
 Let $\r$ be any fixed word in $\Fqm^n$ and all codewords in $\mathcal L$ have $\rk(\r-\c_i) \leq \tau$. 
 Each $\r-\c_i$, for $i=1,\dots,\ell$, of rank $t \leq \tau$ contains $\quadbinomsmall{t}{2 t -d+ 1}$ subspaces of dimension $2t-d+1$.
 
 Due to Lemma~\ref{lem:subspace2taud1}, different $\r-\c_i$ 
 have no $(2 t -d+ 1)$-dimensional subspace in common and therefore 
 there are at most $\quadbinomsmall{n}{2 t -d+ 1}/\quadbinomsmall{t}{2 t + 1-d}$ possible codewords in 
 rank distance $t$ to the word $\r$. We sum this up for $t$ from $\dhalf +1$ up to $\tau$ and we obtain \eqref{eq:listsize_polys_upper}.
 
 With the bounds for the Gaussian binomial and since $\dhalf+1\leq t \leq \tau$, the upper bound from \eqref{eq:listsize_polys_upper2} follows.
 \end{proof}
 
Note that for the special case $\tau = d/2$ and even minimum distance $d$, the upper bound from \eqref{eq:listsize_polys_upper} is the bound from \cite[Equation~(4)]{AugotLoidreau-JohnsonRankMetric}, which is
  \begin{align*}
  \ell &\leq (q^n-1) \frac{q^{n-d/2}-q^{n-d}}{q^n-2q^{n-d/2}+q^{n-d}}
   =(q^n-1) \frac{q^{-\tau}-q^{-2\tau}}{1-2q^{-\tau}+q^{-2\tau}}
   =\frac{q^n-1}{q^\tau-1} = 
   \frac{\quadbinomsmall{n}{1}}{\quadbinomsmall{\tau}{1}}.
   \end{align*}

Thus, we have proved an upper bound on the maximum list size of a Gabidulin code. 
Unfortunately, this upper bound is exponential in $n \leq m $ for any $\tau > \dhalf$ and therefore
does not provide any conclusion if polynomial-time list decoding is possible or not in the region up to the Johnson bound.

\end{document}